\newtheorem{theorem}[subsection]{Theorem}
\newtheorem{lemma}[subsection]{Lemma}
\newtheorem{proposition}[subsection]{Proposition}
\theoremstyle{definition}
\newtheorem{notation-convention}[subsection]{Notations
and Conventions}
\newtheorem{remark}[subsection]{Remark}
\begin{document}
\title{A Note on Nahm's Conjecture in Rank 2 Case}
\author{An Huang\footnote{anhuang@berkeley.edu}, Chul-hee Lee\footnote{chlee@math.berkeley.edu}\\
Department of Mathematics\\
University of California, Berkeley\\
CA 94720-3840 USA\\}
\date{August 2010}
\maketitle
\begin{abstract}
\noindent The aim of this paper is to get a complete list of positive definite symmetric matrices with integer entries $\begin{bmatrix}a&b\\b&d\end{bmatrix}$ such that all complex solutions to the system of equations
\begin{align}
1-x_1=x_1^ax_2^b\nonumber \\ 
1-x_2=x_1^bx_2^d\nonumber
\end{align}
are real. This result is related to Nahm's conjecture in rank 2 case. 
\end{abstract}

\section{Introduction}
The investigation of torus partition functions of certain conformal field theories is one of the central topics relating physics and number theory. From physical considerations, in general, one expects those functions to have nice modular transformation properties. \cite{Zhu} is a fundamental paper devoted to establish such expectations rigorously by using the theory of vertex operator algebras. In \cite{Nahm}, Nahm considered certain rational conformal field theories with integrable perturbations, and argued that their partition functions have a canonical sum representation in terms of $q$-hypergeometric series. Consequently, he conjectured a partial answer to the question of when a particular $q$-hypergeometric series is modular. This is called Nahm's conjecture, and is discussed in \cite{Zagier}. (In particular, one may consult pages 40 and 41 of this paper for a precise statement of the conjecture.) 

To state Nahm's conjecture, we consider the $q$-hypergeometric series
\begin{equation*}
f_{A,B,C}(z)=\sum_{n=(n_1,...,n_r)\in (\mathbb{Z}_{\geq 0})^r}\frac{q^{\frac{1}{2}n^tAn+B^tn+C}}{(q)_{n_1}...(q)_{n_r}},
\end{equation*}
where $(q)_n$ denotes the product $(1-q)(1-q^2)...(1-q^n)$, $A$ is a positive definite symmetric $r\times r$ matrix, $B$ is a vector of length $r$, and $C$ is a scalar, all three with rational coefficients. We will call $r$ the rank throughout the paper. Associated with $A=(a_{ij})$ we consider the system of $r$ equations of $r$ variables $x_1,...,x_r$
\begin{equation} \label{N1}
1-x_i=\prod_{j=1}^{r}x_j^{a_{ij}}  \hspace{2.2cm}                (i=1,...,r).
\end{equation}
The definition of \eqref{N1} needs to be made more precise when there are nonintegral entries in $A$. However, in this paper we only consider the case when all entries are integral, so there is no problem. There are only finitely many solutions to \eqref{N1}, so all solutions lie in $\overline{\mathbb{Q}}$. For any solution $x=(x_1,...,x_r)$, we consider the element $\xi_x=[x_1]+...+[x_r]\in \mathbb{Z}[F]$, where $F$ is the number field $\mathbb{Q}(x_1,...,x_r)$. $\xi_x$ defines an element in the Bloch group $\mathcal{B}(F)$ by a standard construction \cite{Zagier}. Then Nahm's conjecture asserts that the following are equivalent:
\begin{enumerate}
\item [(i)] The element $\xi_x$ is a torsion element of $\mathcal{B}(F)$ for every solution $x$ of \eqref{N1}
\item [(ii)] There exist $B$ and $C$ such that $f_{A,B,C}(z)$ is a modular function.
\end{enumerate}

It is not hard to show that Nahm's conjecture holds in rank 1 case (i.e., $r=1$): one may see \cite{Zagier} for example. In particular, $A=1, 2$ are the only integral values of $A$ that satisfy (i). However, from rank 2 and above, both directions are open. Obviously, it will be very useful if one can get a complete list of matrices $A$ such that condition (i) above holds (Let us denote this list by $L$). Presumably, however, this is a hard task. Instead, one may ask an easier question: to determine the set of matrices $A$ such that all solutions to \eqref{N1} are real. For any number field $F$, it is well known that the free part of $\mathcal{B}(F)$ is isomorphic to $\mathbb{Z}^{r_2}$, where $2r_2$ is the number of complex embeddings of $F$. So if all solutions to \eqref{N1} are real, then $F$ is totally real, $r_2=0$, and any element of $\mathcal{B}(F)$ is torsion. Therefore for each rank $r$, the set of these matrices is a subset of $L$. We wish to investigate this subset for two reasons: first, it is reasonable to expect that this subset constitutes a substantial part of $L$ (From \cite{Zagier} and \cite{Nahm}, one may look at the available examples of Nahm's conjecture to get an idea of this. Except some trivial infinite families, most examples are in this subset.); second, we hope this subset is much more tractable. 

In this paper we focus our attention to rank 2 case, and consider only the case when all entries of $A$ are integers. With these restrictions, we will identify this subset exactly by using Bezout's theorem. The idea is very simple: when $r=2$, we introduce a new variable $z$ and consider homogeneous equations corresponding to \eqref{N1}. Then Bezout's theorem tells us the exact number of complex solutions to the system of homogeneous equations, counting multiplicity. There are solutions to the homogeneous system of equations, corresponding to $z=0$, which are not solutions to \eqref{N1}. We can estimate the multiplicities of these  solutions, and this gives us a lower bound for the number of solutions to \eqref{N1}, counting multiplicity. The technique we will employ to estimate the multiplicities is the method of local analytic parametrization. For reference, one can see \cite{Walker}, chapter IV, sections 1-5. On the other hand, we prove an absolute upper bound for the number of real solutions to \eqref{N1}, counting multiplicity. Combining these, together with the condition that all solutions to \eqref{N1} are real, we obtain inequalities for $a,b,d$, which are sharp enough to enable us to determine all possibilities. 

Our aim is to prove
\begin{theorem}
If $\begin{bmatrix}a&b\\b&d\end{bmatrix}$ is a positive definite symmetric matrix with integer entries such that all complex solutions to the system of equations
\begin{equation}\label{N}
\begin{aligned}
1-x_1=x_1^ax_2^b\\
1-x_2=x_1^bx_2^d
\end{aligned}
\end{equation}
are real, then $\begin{bmatrix}a&b\\b&d\end{bmatrix}$ equals one of $\begin{bmatrix}2&1\\1&1\end{bmatrix}$, $\begin{bmatrix}1&1\\1&2\end{bmatrix}$,    $\begin{bmatrix}4&2\\2&2\end{bmatrix}$, $\begin{bmatrix}2&2\\2&4\end{bmatrix}$, $\begin{bmatrix}1&-1\\-1&2\end{bmatrix}$, $\begin{bmatrix}2&-1\\-1&1\end{bmatrix}$, $\begin{bmatrix}2&-1\\-1&2\end{bmatrix}$, $\begin{bmatrix}2&0\\0&2\end{bmatrix}$, $\begin{bmatrix}1&0\\0&2\end{bmatrix}$, $\begin{bmatrix}2&0\\0&1\end{bmatrix}$, or $\begin{bmatrix}1&0\\0&1\end{bmatrix}$.
\end{theorem}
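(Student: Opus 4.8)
The plan is to turn the hypothesis into a numerical inequality by comparing a lower bound for the number of finite complex solutions with an absolute upper bound for the number of real ones. First I would record two elementary facts about \eqref{N}: no solution can have a zero coordinate (setting $x_1=0$ in the first equation forces $1=0$), and no coordinate can equal $1$; hence every solution lies in the torus $(\mathbb{C}^\ast)^2$ and is an honest affine point. Next I homogenize. When $b\geq 0$ the two equations become the plane curves $F_1 = x_1^ax_2^b + x_1z^{a+b-1} - z^{a+b}$ and $F_2 = x_1^bx_2^d + x_2z^{b+d-1}-z^{b+d}$ of degrees $a+b$ and $b+d$; for $b<0$ one clears denominators first, changing the degrees accordingly. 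Since the two curves share no component, Bezout's theorem gives exactly $(a+b)(b+d)$ intersection points in $\mathbb{P}^2$ counted with multiplicity. On the line at infinity $z=0$ the positivity $a,d>0$ forces the intersections to lie only at $[1:0:0]$ and $[0:1:0]$ (and for $b=0$ there are none), while the affine chart meets no coordinate axis. Writing $\mu_\infty$ for the total intersection multiplicity at these two points, the number $N$ of finite solutions counted with multiplicity is therefore $N=(a+b)(b+d)-\mu_\infty$.

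The second step is to bound $\mu_\infty$ from above, which by the previous identity bounds $N$ from below. Here I would use local analytic (Puiseux) parametrization at each point at infinity. In the chart $x_1=1$ near $[1:0:0]$ the first curve reads $x_2^b + z^{a+b-1}(1-z)$, whose branches satisfy $x_2\sim \zeta z^{(a+b-1)/b}$; substituting such a branch $z=t^b$, $x_2=\zeta t^{a+b-1}+\cdots$ into the second curve and reading off the order in $t$ gives that branch's contribution to the intersection multiplicity, and summing over branches yields $\mu_{[1:0:0]}$. The point $[0:1:0]$ is handled symmetrically. The delicate feature is that the leading exponents frequently tie (for instance $d(a+b-1)$ against $b(b+d)$), so the naive leading-order estimate can be off by the amount of cancellation; controlling this is what makes the bound rigorous. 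The outcome is an explicit upper bound for $\mu_\infty$, hence a lower bound $N\geq P(a,b,d)$ for an explicit function that grows with the entries of $A$.

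The third step is the absolute upper bound on the number of real solutions, and this is where convexity does the work. In any one of the finitely many sign regions — determined by the signs of $x_1,x_2$ and of $1-x_1,1-x_2$, which also constrains the feasible parities of $b$ — I would pass to the logarithmic coordinates $u=\log|x_1|$, $v=\log|x_2|$. Taking logarithms of the absolute values of both sides of \eqref{N} turns the system into $\nabla\Phi=0$ for a potential $\Phi$ whose Hessian, by positive-definiteness of $A$ together with the fixed-sign convexity of the functions $t\mapsto\log|1-e^{\pm t}|$, is definite throughout the region. Hence $\Phi$ is strictly convex or strictly concave there and has at most one, necessarily nondegenerate, critical point. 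Summing the contribution $1$ over the finitely many regions bounds the number of real solutions, counted with multiplicity, by an absolute constant $M$; the diagonal example $\begin{bmatrix}2&0\\0&2\end{bmatrix}$, with one solution in each quadrant, shows $M\geq 4$.

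Finally I would combine the three steps. Under the hypothesis that every complex solution of \eqref{N} is real, the set of finite complex solutions coincides with the set of real ones, so $P(a,b,d)\leq N\leq M$. This is an explicit polynomial inequality in $a,b,d$ which, together with $a,d>0$, $ad-b^2>0$, and integrality, is satisfied by only finitely many matrices; each survivor is then checked by hand — eliminating one variable as above and testing whether the resulting one-variable polynomial has only real roots — to decide membership, producing exactly the eleven matrices listed. I expect the main obstacle to be the sharp multiplicity estimate at infinity: the exponent ties and resulting cancellations, together with the separate homogenization needed when $b<0$, make $\mu_\infty$ genuinely delicate, whereas the real-solution bound is comparatively clean once the definiteness of the logarithmic potential is observed.
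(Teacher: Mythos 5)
Your steps 1 and 2 follow the paper's strategy exactly (homogenize, apply Bezout, bound the intersection multiplicities at $[1:0:0]$ and $[0:1:0]$ by local analytic parametrization; this is the content of the paper's Lemmas 2.1--2.3), but your step 3 --- the absolute bound on the number of real solutions --- has a genuine gap, and it is the load-bearing step. Write $x_i=\epsilon_i e^{u_i}$ and
$\Phi(u_1,u_2)=\sum_i \int_0^{u_i}\log|1-\epsilon_i e^s|\,ds-\tfrac12\left(au_1^2+2bu_1u_2+du_2^2\right)$,
so that critical points of $\Phi$ are solutions of the absolute-value system. The Hessian of $\Phi$ is $\mathrm{diag}\left(\phi_1'(u_1),\phi_2'(u_2)\right)-A$ with $\phi_i(s)=\log|1-\epsilon_i e^s|$: what enters is the \emph{first} derivative of these logarithms, not their convexity, and it does not have the behavior you need. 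In a region with $x_1>1$ one has $\phi_1'(u_1)=e^{u_1}/(e^{u_1}-1)$, which tends to $+\infty$ as $x_1\to 1^+$ and to $1$ as $x_1\to\infty$; so for $a\geq 2$ the corresponding diagonal entry changes sign inside the region and the Hessian is indefinite near $x_1=1$. In the region $x_1,x_2<0$ one has $\phi_i'\in(0,1)$, so the Hessian is $\mathrm{diag}(s,t)-A$ with $s,t\in(0,1)$, whose determinant $(a-s)(d-t)-b^2$ becomes negative as $s,t\to 1$ whenever $(a-1)(d-1)<b^2$ --- e.g.\ for $A=\begin{bmatrix}3&1\\1&1\end{bmatrix}$, or indeed for $\begin{bmatrix}2&1\\1&1\end{bmatrix}$, a matrix on the final list. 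So $\Phi$ is not strictly convex or concave on each sign region, and ``at most one critical point per region'' does not follow. The only region where your argument works is $(0,1)^2$, where both $\phi_i'<0$ makes the Hessian negative definite; that recovers the known unique solution there, but nothing more.

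This is precisely the point where the paper does something different and unavoidably more laborious: its Lemma 2.4 eliminates $x_2$, reduces each sign case to a one-variable equation, and uses Rolle-type counting (if $f$ has $N$ real zeros with multiplicity, $f'$ has at least $N-1$), obtaining the bounds $2$ for the case $x_1>1$ and $6$ for the case $x_1,x_2<0$ --- not $1$ --- for a total of $9$. You would need an argument of this kind in place of your convexity claim before your step 4 can run. Two smaller discrepancies: the paper's lower bound from Bezout is not uniformly ``growing in the entries'' --- in the exceptional cases $\Delta=a$ and $\Delta=d<b$ it is only $a-d$, and extracting the bound $a,b,d<100$ requires handling these cases separately; and the final finite verification is not done by hand in the paper but by appeal to the computational list in \cite{Zagier}, which matters because deciding for a given matrix that \emph{all} complex solutions of \eqref{N} are real is itself a nontrivial computation.
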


In section 2, we will assume $b>0$ and prove theorem 1.1 in this case. After we finish the proof of $b>0$ case, it should be very clear how to generalize the proof to the case when $b$ is negative ($b=0$ case is trivial as it reduces to the rank 1 case). We will include the proof for the latter case in section 3. At the end of this paper, we will also indicate a possible generalization of the above method to higher rank cases.
\begin{remark}
Concerning the list of rank 2 examples in \cite{Zagier}, except the trivial infinite family corresponding to central charge one representations of the Virasoro algebra, there is essentially only one known example from integral rank 2 case, where all solutions give rise to torsion, but not all solutions are real. Namely, the matrix $\begin{bmatrix}4&1\\1&1\end{bmatrix}$ (or $\begin{bmatrix}1&1\\1&4\end{bmatrix}$).  
\end{remark}
\section{The case when $b>0$}
Without loss of generality, we assume $a\geq d$ in this section. Thus $a>b\geq 1$.
We consider the following homogeneous system of equations in $\mathbb{CP}^2$:
\begin{equation}\label{H}
\begin{aligned}
z^{a+b}-x_1z^{a+b-1}-x_1^ax_2^b=0\\
z^{b+d}-x_2z^{b+d-1}-x_1^bx_2^d=0
\end{aligned}
\end{equation}  
By Bezout's theorem, the number of solutions to \eqref{H}, counting multiplicity, is equal to $(b+a)(b+d)$. Obviously, complex solutions to \eqref{H} with $z\neq 0$ are in one-to-one correspondence with complex solutions to \eqref{N}. Moreover, \eqref{H} has two solutions with $z=0$: $[x_1:x_2:z]=[1:0:0]$, and $[x_1:x_2:z]=[0:1:0]$. Let us denote their multiplicities by $i_1$ and $i_2$, respectively. And we denote the determinant of the matrix $A$ by $\Delta$. We have the following lemmas for $(b+a)(b+d)-i_1-i_2$.
\begin{lemma}
Let $g$ denote the greatest common divisor of $2b$ and $d$.
If $\Delta=d$ and $b>d$, we have $i_1\leq b(b+d)+g$.\\
Otherwise, $i_1=\text{min}\left\{b(b+d),d(a+b-1)\right\}$.
\end{lemma}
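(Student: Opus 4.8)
The plan is to pass to the affine chart $x_1=1$ of $\mathbb{CP}^2$, in which $[1:0:0]$ becomes the origin in coordinates $(x_2,z)$ and $i_1$ is exactly the local intersection multiplicity $I_{(0,0)}(F_1,F_2)$ of the dehomogenized curves
\begin{align*}
F_1 &= z^{a+b}-z^{a+b-1}-x_2^b,\\
F_2 &= z^{b+d}-x_2z^{b+d-1}-x_2^d.
\end{align*}
Regarding $F_1$ as a polynomial in $x_2$ over the convergent power series ring $\mathbb{C}\{z\}$, it is $-x_2^b+z^{a+b-1}\cdot(\text{unit})$, so its $b$ Puiseux roots are $\xi_k(z)=c_kz^{(a+b-1)/b}(1-z)^{1/b}$ with $c_k^b=-1$. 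Since the leading coefficient of $F_1$ in $x_2$ is a unit, I would use the standard local parametrization formula $i_1=\mathrm{ord}_z\prod_{k=1}^b F_2(\xi_k(z),z)=\sum_{k=1}^b\mathrm{ord}_z F_2(\xi_k(z),z)$, replacing the intersection number by a sum of $z$-orders along the analytic branches of $F_1$.

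First I would record the $z$-orders of the three monomials of $F_2(\xi_k,z)$: the terms $z^{b+d}$, $-\xi_kz^{b+d-1}$, $-\xi_k^d$ have orders $b+d$, $(b+d)+(a-1)/b$, and $d(a+b-1)/b$. Because $a>1$, the middle order strictly exceeds $b+d$ and never controls the valuation, so the order is governed by the comparison of $b+d$ with $d(a+b-1)/b$. A direct computation shows $d(a+b-1)/b=b+d$ is equivalent to $\Delta=d$. Hence when $\Delta\ne d$ the minimal order is attained by a single monomial with nonzero coefficient, no cancellation occurs, and summing the common value over the $b$ roots gives $i_1=b\cdot\min\{b+d,\,d(a+b-1)/b\}=\min\{b(b+d),\,d(a+b-1)\}$, which is the ``otherwise'' assertion.

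The delicate case is $\Delta=d$, where $z^{b+d}$ and $-\xi_k^d$ share the order $b+d$ and their coefficients can cancel. The coefficient of $z^{b+d}$ in $F_2(\xi_k,z)$ is $1-c_k^d$, so cancellation occurs exactly for the resonant roots with $c_k^d=1$; let $R$ be their number. For such a root the next term comes from $-\xi_k^d=-z^{b+d}(1-z)^{d/b}$, contributing $\frac{d}{b}z^{b+d+1}$, and I must check this is not itself cancelled by the middle monomial. Here $\Delta=d$ together with $b>d$ forces $a>b+1$, so the middle order $(b+d)+(a-1)/b$ strictly exceeds $b+d+1$ and the resonant roots have order exactly $b+d+1$. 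This yields $i_1=(b-R)(b+d)+R(b+d+1)=b(b+d)+R$. Finally every resonant $c_k$ satisfies $c_k^{2b}=1$ and $c_k^d=1$, hence $c_k^{g}=1$ with $g=\gcd(2b,d)$, so $R\le g$ and $i_1\le b(b+d)+g$.

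The main obstacle is precisely the cancellation analysis when $\Delta=d$: one must rule out a secondary cancellation with the middle monomial and bound the number of resonant roots of unity. The first point is exactly where the hypothesis $b>d$ enters, since $\Delta=d$ with $b\le d$ would force $b=d$ and $a=b+1$, whence $R=0$ and the generic formula $i_1=b(b+d)=\min\{b(b+d),d(a+b-1)\}$ already holds. Everything else is bookkeeping of $z$-orders, so I would take most care in making the Puiseux expansion of $(1-z)^{d/b}$ and the root-of-unity count fully rigorous.
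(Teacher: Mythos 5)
Your proposal is correct and follows essentially the same route as the paper: dehomogenize at $x_1=1$, locally parametrize the first curve at the origin (you via its $b$ Puiseux roots and $\mathrm{ord}_z$, the paper via branch parametrizations $z=t^{u_2}$ with parameter $t$), substitute into the second curve, compare the orders of the three monomials, and in the resonant case $\Delta=d$ bound the cancellation by a root-of-unity count yielding $g=\gcd(2b,d)$. The only differences are cosmetic streamlinings: you treat the sub-case $b=d$ uniformly through the resonance count $R=0$ (the paper handles it separately via tangent cones and the product of multiplicities), and your bound $R\le g$ replaces the paper's equivalent count of indices $k$ with $b'\mid 2k-1$.
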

\begin{proof}
Take $x_1=1$, then \eqref{H} are reduced to
\begin{subequations} 
\begin{align}
z^{a+b}-z^{a+b-1}-x_2^b=0\label{1}\\
z^{b+d}-x_2z^{b+d-1}-x_2^d=0\label{2}
\end{align} 
\end{subequations}

We would like to take a local analytic parametrization of \eqref{1} at $z=x_2=0$, and compute $i_1$ with that. Let $e$ denote the greatest common divisor of $a+b-1$ and $b$, and write $a+b-1=eu_1$, $b=eu_2$. Take 
\begin{equation}\label{z}
z=t^{u_2},
\end{equation} 
then we have $(\frac{x_2}{t^{u_1}})^{eu_2}=t^{u_2}-1$. We use $(1-t^{u_2})^{\frac{1}{eu_2}}$ to denote its Taylor series, then
\begin{equation}\label{x}
x_2=\frac{t^{u_1}(1-t^{u_2})^{\frac{1}{eu_2}}\omega}{\omega'},
\end{equation}
where $\omega$ is a $b$th root of unity, and $\omega'$ is any chosen primitive $2b$th root of unity.

It is straightforward to check that equations \eqref{z} and \eqref{x} give a local analytic parametrization of \eqref{1} at $z=x_2=0$, provided that $\omega$ varies among $\omega_1^k$, where $\omega_1$ is a primitive $b$th root of unity, and $k=1,2,...,e$. The point is that for each point on the affine curve \eqref{1}, there exists a unique pair of $k$ and $t$, such that \eqref{z} and \eqref{x} give rise to the coordinates of that point. For definiteness, let us choose $\omega_1=(\omega')^2$.(Note that these choices are not unique. However, any choice will give rise to the same answer for $i_1$, of course.)

Substituting \eqref{z} and \eqref{x} into \eqref{2}, then \eqref{2} becomes
\begin{equation}\label{rr}
t^{(b+d)u_2}-\frac{t^{u_1+(b+d-1)u_2}(1-t^{u_2})^{\frac{1}{eu_2}}\omega}{\omega'}-t^{du_1}(\frac{(1-t^{u_2})^{\frac{1}{eu_2}}\omega}{\omega'})^d.
\end{equation}
By the theory of local analytic parametrization, $i_1$ equals the sum over $k$ of the degrees of the lowest degree terms in $t$ in \eqref{rr}.

Since $a>1$, $u_1>u_2$, $(b+d)u_2<u_1+(b+d-1)u_2$.

If $\Delta\neq d$, then $(b+d)u_2\neq du_1$. Therefore, for each choice of $\omega$, the degree of the lowest degree term in \eqref{rr} equals $\text{min}\left\{(b+d)u_2,du_1\right\}$. So we have $i_1= \text{min}\left\{b(b+d),d(a+b-1)\right\}$.

If $\Delta=d$, then $(b+d)u_2=du_1$. Since $d=\Delta=ad-b^2$, $b^2=d(a-1)$. So $a-1\geq b\geq d$. There are two possibilities:

(i) $b=d$, then $a-1=b$. We have $i_1=I(z^{2b+1}-z^{2b}-x_2^b,z^{2b}-x_2z^{2b-1}-x_2^b)=I(z^{2b+1}-z^{2b}-x_2^b,z^{2b}-x_2z^{2b-1}-(z^{2b+1}-z^{2b}))=b(2b)=b(b+d)$,
as the intersection multiplicity equals the product of multiplicities of the point on each curve, if the two curves share no common tangent lines at the point.

(ii) $b>d$. Then $a-1>b$. Therefore $u_1+(b+d-1)u_2$ is greater than $du_1+u_2$, which is the degree of the second lowest degree term of $t^{du_1}(\frac{(1-t^{u_2})^{\frac{1}{eu_2}}\omega}{\omega'})^d$. Consequently, the degree of the lowest degree term in \eqref{rr} equals $(b+d)u_2$ if $(\frac{\omega}{\omega'})^d\neq 1$, and equals $du_1+u_2$ if $(\frac{\omega}{\omega'})^d=1$. We write $d=gd'$, $2b=gb'$. Since $g$ is the greatest common divisor of $2b$ and $d$, $d'$ and $b'$ are coprime. Therefore, for $\omega=\omega_1^k$ and $\omega_1=(\omega')^2$, $(\frac{\omega}{\omega'})^d=1$ only if $b'$ divides $2k-1$, which can happen for at most $\frac{2e}{b'}$ many $k$. We then get an estimate for $i_1$:
\begin{equation}
i_1\leq (b+d)u_2(e-\frac{2e}{b'})+(du_1+u_2)(\frac{2e}{b'})=b(b+d)+g.
\end{equation}  
\end{proof}
\begin{lemma}
If $\Delta=a$, then $i_2\leq b(a+b)+d-1$.\\
Otherwise, $i_2=\text{min}\left\{b(b+a),a(d+b-1)\right\}$.
\end{lemma}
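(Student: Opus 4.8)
The plan is to mirror the proof of the previous lemma, exploiting the symmetry of \eqref{H} that interchanges its two equations while swapping $(x_1,a)$ with $(x_2,d)$; this symmetry carries $[1:0:0]$ to $[0:1:0]$, hence $i_1$ to $i_2$. Accordingly I would set $x_2=1$, so that \eqref{H} becomes $z^{a+b}-x_1z^{a+b-1}-x_1^a=0$ and $z^{b+d}-z^{b+d-1}-x_1^b=0$, and take a local analytic parametrization of the second equation at $z=x_1=0$. With $e=\gcd(b+d-1,b)$, $b+d-1=eu_1$, $b=eu_2$, this reads $z=t^{u_2}$ and $x_1=t^{u_1}(1-t^{u_2})^{1/(eu_2)}\omega/\omega'$, where $\omega=\omega_1^k$, $k=1,\dots,e$, under the same conventions on the roots of unity $\omega,\omega',\omega_1$ as before. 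Substituting into $z^{a+b}-x_1z^{a+b-1}-x_1^a$ gives three terms of $t$-degree $(a+b)u_2$, $u_1+(a+b-1)u_2$, and $au_1$, and $i_2$ is the sum over $k$ of the smallest degree occurring.

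Two elementary observations drive the case analysis. Since $d\ge1$ we have $u_1\ge u_2$, so the middle degree $u_1+(a+b-1)u_2$ is never strictly below $(a+b)u_2$; thus the minimum is always decided by the two outer terms. And a direct computation gives $(a+b)u_2=au_1\iff b^2=a(d-1)\iff\Delta=a$. When $\Delta\ne a$ the two outer degrees are unequal, every branch yields the same minimum (the coefficients $1$ and $(\omega/\omega')^a$ are nonzero on each branch), and summing over the $e$ branches gives $i_2=\min\{(a+b)eu_2,\,a\,eu_1\}=\min\{b(b+a),a(d+b-1)\}$. The only tie among degrees, namely $u_1=u_2$ when $d=1$, sits strictly above this minimum (then $\Delta\ne a$ and $au_1=a<a+b$), so it produces no cancellation and the equality is exact.

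The case $\Delta=a$ is the heart of the matter. Now $(a+b)u_2=au_1$, so the $z^{a+b}$ and $-x_1^a$ terms share the lowest degree, with combined leading coefficient $1-(\omega/\omega')^a$. On a branch where $(\omega/\omega')^a\ne1$ the lowest degree is exactly $(a+b)u_2$, while on a branch where $(\omega/\omega')^a=1$ this term cancels and the degree jumps to the middle value $u_1+(a+b-1)u_2$. Here the section hypothesis $a>b$ enters decisively: together with $b^2=a(d-1)$ it forces $b>d-1$, which is exactly what makes the middle degree smaller than the next term $au_1+u_2$ of $z^{a+b}-x_1^a$, so the jump lands on the middle term and none of the sub-cases needed for $i_1$ arise. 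Each cancelling branch thus contributes an excess $u_1-u_2=(d-1)/e$ (an integer, since $e\mid d-1$), and there are at most $e$ of them among $k=1,\dots,e$, whence $i_2=b(a+b)+m\,(d-1)/e\le b(a+b)+(d-1)$, where $m$ is the number of cancelling branches. I expect the bookkeeping of this last case to be the main obstacle—pinning down which leading coefficient cancels and confirming that the degree then rises exactly to the middle term; but because the crude bound $m\le e$ already suffices, no delicate divisibility count of the cancelling branches (as was needed for $i_1$) is required here.
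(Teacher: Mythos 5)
Your proposal is correct and follows essentially the same route as the paper: dehomogenize at $[0:1:0]$ by setting $x_2=1$, take the local analytic parametrization of the second curve at the origin, and compare the three $t$-degrees $(a+b)u_2$, $u_1+(a+b-1)u_2$, $au_1$, with the case split governed by $\Delta=a$ (equivalently $(a+b)u_2=au_1$) and the $d=1$ tie handled as a harmless sub-case. The only difference is one of precision: where you track the exact excess $m(d-1)/e$ coming from the cancelling branches (and correctly use $a>b$ to show the jump lands on the middle term), the paper applies a cruder estimate bounding every branch by the middle degree, and both yield the same bound $b(a+b)+d-1$.
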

\begin{proof}
The proof is very similar to the proof of the above lemma. Everywhere one replaces $a$,$e$,$u_1$,$u_2$ by $d$,$f$,$v_1$,$v_2$, respectively. If $\Delta\neq a$ and $d>1$, then $v_1>v_2$, and the same estimate gives $i_2=\text{min}\left\{b(b+a),a(d+b-1)\right\}$. If $\Delta=a$, then $d>1$, and the difference with the previous case is that now we have $v_1+v_2(a+b-1)<av_1+v_2$, so the terms of degree $v_1+v_2(a+b-1)$ survives, and a rude estimate gives $i_2\leq f(v_1+v_2(a+b-1))=b(a+b)+d-1$. If $d=1$, then $a>\Delta$, $(a+b)v_2=v_1+(a+b-1)v_2>av_1$, and $i_2=afv_1=a(b+d-1)$.
\end{proof}
Having both lemmas, now let us estimate $(a+b)(d+b)-i_1-i_2$. Obviously $\Delta=a$ and $\Delta=d$ cannot both happen. We have the following result.
\begin{lemma}
If $\Delta=a$ or $\Delta=d<b$, $(a+b)(d+b)-i_1-i_2\geq a-d$.\\
Otherwise, $(a+b)(d+b)-i_1-i_2\geq a$.
\end{lemma}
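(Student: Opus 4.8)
The plan is to reduce the whole computation to two elementary identities that govern the minima appearing in Lemmas 2.1 and 2.2. Expanding the products, one checks
\[
b(b+d)-d(a+b-1)=d-\Delta,\qquad b(b+a)-a(d+b-1)=a-\Delta .
\]
Consequently, in the generic (``Otherwise'') branches $i_1$ equals $b(b+d)$ when $\Delta\ge d$ and $d(a+b-1)$ when $\Delta\le d$, while $i_2$ equals $b(b+a)$ when $\Delta\ge a$ and $a(d+b-1)$ when $\Delta\le a$, the two candidate values coinciding at each boundary. Since $a\ge d$, this reduces the proof to a case analysis on the location of $\Delta$ relative to $d$ and $a$. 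At the outset I would also record the two exclusion facts already noted: $\Delta=a$ and $\Delta=d$ cannot hold simultaneously, and $a=d$ forces $\Delta\ne a$ (otherwise $a(a-1)=b^2$ with $\gcd(a,a-1)=1$ would force both $a$ and $a-1$ to be squares, impossible for $b\ge1$); in particular each of $\Delta=a$ and $\Delta=d$ forces $a>d$.

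Next I would dispatch the ranges in which neither special branch is triggered, by substituting the explicit values above into $(a+b)(d+b)-i_1-i_2$ and simplifying. For $\Delta<d$ one uses $i_1=d(a+b-1)$, $i_2=a(d+b-1)$ and obtains $a+d-\Delta\ge a+1$. For $d\le\Delta<a$ (including $\Delta=d$ with $b\le d$, where $i_1=b(b+d)$) one gets exactly $a$. For $\Delta>a$ one gets $\Delta>a$. In each of these cases neither $\Delta=a$ nor $\Delta=d<b$ holds, so the statement only demands $\ge a$, which is met. These are routine expansions using the identities above.

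The two special branches are where the real work lies, and the $\Delta=d$, $b>d$ branch is the main obstacle, since there only the upper bound $i_1\le b(b+d)+g$ with $g=\gcd(2b,d)$ is available. If $\Delta=a$, then $a>d$ forces $i_1=b(b+d)$, and combining with $i_2\le b(a+b)+d-1$ yields $(a+b)(d+b)-i_1-i_2\ge\Delta-d+1=a-d+1$; as this is the clause $\Delta=a$, the required bound $a-d$ holds. If $\Delta=d$ and $b>d$, then $a>d$ gives $i_2=a(d+b-1)$, and the upper bound on $i_1$ produces $(a+b)(d+b)-i_1-i_2\ge a-g$. The decisive observation is simply that $g=\gcd(2b,d)$ divides $d$, so $g\le d$ and hence $a-g\ge a-d$; and since $b>d$ this is exactly the clause $\Delta=d<b$, whose required bound is $a-d$. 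This exhausts every case, and I expect the only subtlety to be keeping the branch conditions of the two lemmas aligned with the two clauses of the statement.
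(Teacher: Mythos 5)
Your proposal is correct and takes essentially the same route as the paper: both substitute the values and bounds from Lemmas 2.1 and 2.2 into $(a+b)(d+b)-i_1-i_2$, and both handle the special branches by the estimates $a-d+1$ (when $\Delta=a$) and $a-g\geq a-d$ (when $\Delta=d<b$, using $g\mid d$). Your explicit case split on the position of $\Delta$ relative to $d$ and $a$ is just an unpacking of the paper's compact identity $(a+b)(d+b)-i_1-i_2=\Delta+\max\left\{0,d-\Delta\right\}+\max\left\{0,a-\Delta\right\}\geq a$ for the generic case, with the added (harmless) bonus that you justify the paper's ``obvious'' claim that $\Delta=a$ and $\Delta=d$ cannot hold simultaneously.
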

\begin{proof}
If $\Delta\neq a$ and one of $\Delta\neq d$, $\Delta=d=b$ holds, then 
\begin{equation}
\begin{aligned}
&(a+b)(d+b)-i_1-i_2\\
&= (a+b)(d+b)-\text{min}\left\{b(b+d),d(a+b-1)\right\}-\text{min}\left\{b(b+a),a(d+b-1)\right\}.
\end{aligned}
\end{equation}
Therefore, 
\begin{equation}
(a+b)(d+b)-i_1-i_2=\Delta+\text{max}\left\{0,d-\Delta\right\}+\text{max}\left\{0,a-\Delta\right\}\geq a.
\end{equation}
If $\Delta=a$, then $(a+b)(d+b)-i_1-i_2\geq (a+b)(d+b)-b(b+d)-(b(b+a)+(d-1))=a-d+1$.\\
If $\Delta=d<b$, then $(a+b)(d+b)-i_1-i_2\geq (a+b)(d+b)-(b(b+d)+g)-a(d+b-1)=a-g\geq a-d$.
\end{proof}     
Next we prove an absolute upper bound for the number of real solutions to \eqref{N} (counting multiplicity).
\begin{lemma}
The number of real solutions to \eqref{N}, counting multiplicity, is at most 9.
\end{lemma}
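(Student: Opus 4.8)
The plan is to count real solutions orthant by orthant after passing to logarithmic coordinates, bounding each count by a convexity argument. First I would dispose of degenerate cases: if $x_1=0$ or $x_2=0$ the equations force $1=0$, so every real solution has $x_1,x_2\neq 0$ (and $x_i\neq 1$). For a real solution set $Y_i=\log|x_i|\in\mathbb{R}$ and record the signs $\epsilon_i=\mathrm{sign}(x_i)\in\{\pm1\}$. Taking the logarithm of the absolute value of each equation in \eqref{N} turns the system into
\begin{equation}\nonumber
\psi_{\epsilon_1}(Y_1)=aY_1+bY_2,\qquad \psi_{\epsilon_2}(Y_2)=bY_1+dY_2,
\end{equation}
where $\psi_\epsilon(Y)=\log|1-\epsilon e^{Y}|$, subject to the sign constraints $\mathrm{sign}(1-\epsilon_i e^{Y_i})=\epsilon_1^{a_{i1}}\epsilon_2^{a_{i2}}$. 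An upper bound for the number of real solutions of \eqref{N}, with multiplicity, is then the sum over the four sign patterns $(\epsilon_1,\epsilon_2)$ of the number of solutions (with multiplicity) of this pair in the region cut out by the sign constraints; empty patterns simply contribute $0$.

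The basic analytic input is that $\psi_{+}$ is \emph{concave} on each of $\{Y<0\}$ and $\{Y>0\}$, while $\psi_{-}$ is \emph{convex} on all of $\mathbb{R}$ (compute $\psi_\pm''$). Since $b\geq1$, I can solve the first equation for $Y_2=G_1(Y_1):=\tfrac1b(\psi_{\epsilon_1}(Y_1)-aY_1)$ and substitute into the second, so that in each orthant the solutions correspond, with multiplicity, to the roots of $K-\mathrm{id}$, where $K$ is the composite $G_2\circ G_1$ (or symmetrically $G_1\circ G_2$) and $G_2(Y_2):=\tfrac1b(\psi_{\epsilon_2}(Y_2)-dY_2)$; note $G_i$ inherits the convexity of $\psi_{\epsilon_i}$, and the sign constraint for the index $i$ with $\epsilon_i=+1$ confines the corresponding variable to a single half-line.

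Then I would bound each orthant. For $(+,+)$ a real solution must lie in $(0,1)^2$, where the system is the gradient of $\Phi=-\sum_i \mathrm{Li}_2(x_i)-\tfrac12(\log x)^{t}A(\log x)$ whose Hessian $-\mathrm{diag}\!\big(\tfrac{x_i}{1-x_i}\big)-A$ is negative definite; hence $\Phi$ is strictly concave with at most one, necessarily nondegenerate, critical point, contributing at most $1$. For $(+,-)$ and $(-,+)$ the two blocks $\psi_{\epsilon_1},\psi_{\epsilon_2}$ have \emph{opposite} convexity, which forces $K''>0$ on the relevant half-line; there $K-\mathrm{id}$ is convex and has at most $2$ roots, giving at most $2$ from each. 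This accounts for $1+2+2=5$.

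The pattern $(-,-)$ is the main obstacle: here both blocks are convex, $K$ is not globally convex, and I expect up to $4$ solutions — this is where the bound $9=1+2+2+4$ is attained (for instance when $a,b,d$ are all even, so that all four patterns can be nonempty). To handle it I would show that $K''$, a sum of a positive term $G_2''(G_1)(G_1')^2$ and a negative term $G_2'(G_1)G_1''$ built from the unimodal function $\psi_-'$, changes sign at most twice; by Rolle's theorem $K-\mathrm{id}$ then has at most $4$ roots counted with multiplicity. Controlling these sign changes uniformly in $a,b,d$ — using $a>b\geq1$ to pin down the sign of $K''$ as $Y\to\pm\infty$ — is the delicate point, as is verifying that the multiplicity of a root of $K-\mathrm{id}$ matches the multiplicity of the corresponding solution of \eqref{N} under the graph reduction $Y_2=G_1(Y_1)$. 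Summing the four contributions yields the bound $9$.
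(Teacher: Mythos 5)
Your orthant decomposition and the convexity arguments for three of the four orthants are sound, and they are genuinely different from (and in places cleaner than) the paper's one-variable elimination: the strict concavity of $\Phi$ in logarithmic coordinates does give a unique nondegenerate solution in $(0,1)^2$, and in the mixed-sign orthants the computation $K''=G_2''(G_1)(G_1')^2+G_2'(G_1)G_1''>0$ is correct, since $G_2''>0$, $G_1''<0$, and $G_2'=(\psi_-'-d)/b<0$ because $\psi_-'\in(0,1)$ while $d\geq 1$; so those orthants contribute at most $1,2,2$. The genuine gap is the $(-,-)$ orthant, which is exactly the hardest case. You assert that $K''$ changes sign at most twice, hence at most $4$ solutions there, but you give no proof, and the heuristic you offer (unimodality of $\psi_-'$) cannot suffice: up to the positive factor $b^{-3}$, one has
\begin{equation*}
b^3K''(X)=s(Y)\bigl(1-s(Y)\bigr)\bigl(s(X)-a\bigr)^2-b\bigl(d-s(Y)\bigr)s(X)\bigl(1-s(X)\bigr),\qquad Y=G_1(X),
\end{equation*}
with $s$ the sigmoid, i.e.\ a difference of two positive bump functions of $X$, and a difference of two unimodal functions can in general change sign arbitrarily often; some further specific structure of these functions must be exploited, and nothing in your sketch pins it down.

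Moreover, this gap is load-bearing rather than a technicality, because your bookkeeping forces you to prove exactly $4$. The paper never proves $4$ in that orthant: its cases (II) and (III) (elimination of $x_2$, reduction of $g'=0$ to equation \eqref{F}, and a degree count on the cubic factor of the second derivative) only yield $6$ there, and the total of $9$ is recovered because when $b$ is even the parity constraints allow the sharper bound $1$ (not $2$) in each mixed orthant, giving $1+1+1+6=9$, while in the all-odd case only two orthants can be nonempty, giving $1+6=7$. With your mixed-orthant bound of $2$, any bound weaker than $4$ in $(-,-)$ --- for instance the $6$ that the paper's own method establishes --- gives $1+2+2+6=11>9$, and the lemma does not follow. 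So you must either genuinely prove the (unproven, and not obviously true) bound of $4$, or sharpen your mixed-orthant bounds to $1$ in the relevant parity cases, e.g.\ by using the end behavior of the convex function $K-\mathrm{id}$ on the constrained half-line, which is in effect what the paper's case (IV) does. The multiplicity-matching issue you flag at the end is comparatively minor, and the paper treats it at the same level of informality.
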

\begin{proof}
First of all, there is exactly one solution in $(0,1)^2$ with multiplicity one. (In \cite{Zagier}, Zagier already mentioned that there is exactly one solution in this domain, and moreover this holds in much more general case.) By eliminating $x_2$, we get
\begin{equation}\label{3}
(\frac{1-x_1}{x_1^a})^{\frac{1}{b}}+(\frac{1-x_1}{x_1^{\frac{\Delta}{d}}})^{\frac{d}{b}}-1=0.
\end{equation}

In the domain $(0,1)^2$, there is no ambiguity on the definition of \eqref{3}, and obviously solutions to \eqref{N} in the domain are in one-to-one correspondence with solutions to \eqref{3} in $(0,1)$. Moreover, for any solution to \eqref{3} given by $x_1=\tau$, we have the well-defined multiplicity as the valuation of the left hand side of \eqref{3} at the point $<x_1-\tau>$ on the affine $\mathbb{A}_{\mathbb{C}}^1$. In other words, expand the left hand side of \eqref{3} as a formal power series in $(x_1-\tau)$, the multiplicity of the solution $x_1=\tau$ equals the degree of the nonzero lowest degree term. Furthermore, this multiplicity is the same as the multiplicity of the corresponding solution to \eqref{N}.

However, since the left hand side of \eqref{3} is strictly decreasing in $(0,1)$, one easily sees that there is exactly one solution in $(0,1)$. Since the derivative must be negative, the multiplicity is one.

Next, we consider solutions outside $(0,1)^2$. 

If $b$ is odd, then \eqref{3} is defined unambiguously if we concern only real solutions, since there is only one branch of the function $x\rightarrow x^{\frac{1}{b}}$ which maps real numbers to real numbers.

If we have at least one solution with $x_1>1$ , then $d$ must be even, and $x_2<0$. This is case (I). We will show that there are at most 2 solutions in this case, counting multiplicity.

If there exists at least one solution with $x_1<0$, then there are two possibilities:

$x_2>0$. In this case we have $x_2>1$, and $a$ must be even. Exchanging indices 1 and 2, we see that from case (I) there can be at most two solutions for $x_2$, so at most two solutions for $x_1$ as well.

$x_2<0$. In this case we must have both a and d to be odd. This is case (II). We will show that there are at most 6 solutions in this case, counting multiplicity.

Note that any solution to $\eqref{N}$ with $x_1\in (0,1)$ has to be the unique solution in $(0,1)^2$, so the above exhausted all possibilities of real solutions.

If $b$ is even, then we have three possibilities:\\
$x_1<0$, and $x_2<0$. We call this case (III). In this case both $a$ and $d$ must be even, and we will show that there are at most 6 solutions, counting multiplicity.\\
$x_1<0$, and $0<x_2<1$. So $a$ has to be even. We call this case (IV), and we will show that there is at most 1 solution, counting multiplicity.\\
$x_2<0$, and $0<x_1<1$. So $d$ has to be even. This is the same as case (IV) with $x_1$ and $x_2$ switched, so there is at most 1 solution, counting multiplicity.

For case (I), \eqref{3} can be rewritten as
\begin{equation}
f(x_1)=-(\frac{x_1-1}{x_1^a})^{\frac{1}{b}}+(\frac{x_1-1}{x_1^{\frac{\Delta}{d}}})^{\frac{d}{b}}-1=0.
\end{equation}
It is easy to see that we need to have $\Delta<d$, in order that this equation has a solution for $x_1>1$.  We have the following identity for the derivative function:
\begin{equation} 
b(x_1-1)x_1f'(x_1)=x_1^bs^d[(d-\Delta)x_1+\Delta]-s[a-(a-1)x_1],
\end{equation}
where 
\begin{equation} 
s=(\frac{x_1-1}{x_1^a})^{\frac{1}{b}}.
\end{equation}
(Note that multiplying by invertible elements such as $x_1-1$, or $x_1$ in local rings do not affect the multiplicities of solutions. And we are making use of the fact that if a polynomial equation has $N$ real solutions in an open interval, counting multiplicity, then its derivative has $N-1$ real solutions in the same interval, counting multiplicity.)

So $f'(x_1)$ can possibly have a solution only when $x_1<\frac{a}{a-1}$. In this case, when $0<x_1<\frac{a}{a-1}$, $f'(x_1)=0$ iff 
\begin{equation}\label{M1}
s^{d-1}x_1^b[(d-\Delta)x_1+\Delta]=a-(a-1)x_1. 
\end{equation}
But it is easy to see that the left hand side of \eqref{M1} is strictly increasing, and the right hand side is strictly decreasing. So $f'(x_1)=0$ has at most 1 solution counting multiplicity. Consequently, $f(x)=0$ has at most 2 solutions, counting multiplicity.

For case (IV), by eliminating $x_2$ we have
\begin{equation}\label{M4}
(\frac{1+x}{x^a})^{\frac{1}{b}}+(\frac{1+x}{x^{\frac{\Delta}{d}}})^{\frac{d}{b}}-1=0,
\end{equation} 
where $x_1=-x$, $x>1$.

If $\Delta\leq d$, then the left hand side of \eqref{M4} is always greater than 0, so we don't have solutions. If $\Delta>d$, then the left hand side of \eqref{M4} is strictly decreasing, so we have at most one real solution counting multiplicity.

For case (II) and (III), denote $x_1=-x$, with $x>0$, by eliminating $x_2$ we have
\begin{equation}
g(x)=-(\frac{1+x}{x^a})^{\frac{1}{b}}+(\frac{1+x}{x^{\frac{\Delta}{d}}})^{\frac{d}{b}}-1=0.
\end{equation}
For the derivative function, we have
\begin{equation} \label{M23}
b(1+x)tg'(x)=-x^bs_1^d[(\Delta-d)x+\Delta]+s_1[a+(a-1)x],
\end{equation}
where 
\begin{equation} 
s_1=(\frac{1+x}{x^a})^{\frac{1}{b}}.
\end{equation}
So real solutions to $g'(x)=0$ are equivalent to real solutions to (counting multiplicity)
\begin{equation}\label{F}
(\Delta-d)\frac{(1+x)^{u+1}}{x^v}+d\frac{(1+x)^u}{x^v}-(a+(a-1)x)=0,
\end{equation}
where $u=\frac{d-1}{b}$, $v=\frac{\Delta-a}{b}$. But we have
\begin{equation*} 
(\frac{(1+x)^u}{x^v})''=\frac{(1+x)^{u-2}}{x^{v+2}}\times \text{(quadratic polynomial of $x$)}.
\end{equation*}
So the second order derivative of the left hand side of \eqref{F} equals 0 iff 
\begin{equation*}
\frac{(1+x)^{u-2}}{x^{v+2}}\times \text{(a polynomial of $x$ of degree at most 3)}=0.
\end{equation*}

So the second order derivative of the left hand side of \eqref{F} equals 0 has at most 3 solutions, $g'(x)=0$ has at most 5 solutions, and $g(x)=0$ has at most 6 solutions, all counting multiplicity.
Combining all the above, and enumerate all 8 cases of $a,b,d$ being even or odd as in table 1, the lemma is proved.
\begin{table}
\caption{upper bound for the number of real solutions $\#$ (for $a,b,d$, 1 denotes odd, and 0 denotes even)}
\begin{center} \begin{tabular}{| c | c | c | c | c | c | c | c | c |}
\hline $b$&1&1&1&1&0&0&0&0\\
\hline $a$&1&0&0&1&1&1&0&0\\
\hline $d$&0&0&1&1&1&0&1&0\\
\hline $\#\leq$& 1+2&1+2+2&1+2&1+6&1&1+1&1+1&1+1+1+6\\
\hline \end{tabular} \end{center}
\end{table}
\end{proof}
Now let us prove theorem 1.1 for the case $b>0$.
\begin{proof}
In \cite{Zagier}, they first searched for matrices $A$ with integral entries (actually they did it for matrices with rational entries with certain bounds on numerator and denominator) whose absolute values are less than or equal to $100$ such that $L(\xi)\in \pi^2\mathbb{Q}$, where $\xi=\left[x_1\right]+\left[x_2\right]$ is the Bloch group element corresponding to the unique solution $x_1, x_2$ to \eqref{N} in $\left(0,1\right)^2$, and $L(\xi)=L(x_1)+L(x_2)$, where $L(x)$ is the Rogers dilogarithm function. Among these they then identified all matrices such that condition (i) of Nahm's conjecture holds. On the other hand, as we have explained in the introduction, for any matrix $A$ satisfying the condition of theorem 1.1, $A$ satisfies condition (i) of Nahm's conjecture, and in particular $L(\xi)\in \pi^2\mathbb{Q}$, which follows from the well-definedness of the regulator map (for details one may see \cite{Zagier}). Meanwhile, combining lemmas 2.3 and 2.4, we have $a\leq 9$, or $\Delta=a$, $a-b\leq 9$, or $\Delta=d<b$, $a-b\leq 9$. From each of these conditions one easily derives that $a,b,d<100$, so $A$ has to be in the Zagier's list. Therefore, only the matrices $\begin{bmatrix}2&1\\1&1\end{bmatrix}$ and $\begin{bmatrix}4&2\\2&2\end{bmatrix}$ survive.
\end{proof}
\section{The case when $b$ is negative}
We will prove the following
\begin{proposition}
if $\begin{bmatrix}a&b\\b&d\end{bmatrix}$ is a positive definite symmetric matrix with integer entries such that all complex solutions to the system of equations
\begin{align}
1-x_1=x_1^ax_2^b\nonumber \\ 
1-x_2=x_1^bx_2^d\nonumber
\end{align}
are real, and $b$ is negative, then $a,-b,d\leq 20$.
\end{proposition}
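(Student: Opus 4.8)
The plan is to mimic the $b>0$ argument from Section 2, adapting each ingredient to the case $b<0$. Without loss of generality assume $a\geq d$, so again $a>|b|$ (positive definiteness forces $ad>b^2$, hence $a>|b|$ since $a\geq d$). Writing $b=-c$ with $c\geq 1$, the first task is to set up the homogeneous version of \eqref{N} in $\mathbb{CP}^2$. Because the exponents $b$ are now negative, clearing denominators changes the shape of the equations: multiplying through by $x_1^{c}x_2^{c}$ (or the appropriate power) to remove negative exponents, then homogenizing with $z$, will produce a system whose total degrees I would recompute so as to apply Bezout's theorem and get the exact count of projective solutions, counting multiplicity. I would then isolate the spurious solutions on the line $z=0$ (and possibly on the coordinate axes $x_1=0$ or $x_2=0$, which the clearing of denominators may introduce) and estimate their multiplicities by the same local analytic parametrization technique used in Lemmas 2.1 and 2.2.

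The second ingredient is the analogue of Lemma 2.4, an absolute upper bound on the number of \emph{real} solutions counting multiplicity. The structure of that proof carries over: there is still exactly one solution in $(0,1)^2$ with multiplicity one, and I would eliminate $x_2$ to obtain a one-variable equation of the form analogous to \eqref{3}, now with the sign of $b$ flipped inside the exponents. One then splits into cases according to the parities of $a,c,d$ and the signs of $x_1,x_2$, and in each case bounds the number of real roots by controlling the number of sign changes of the first and second derivatives, exactly as in the derivation culminating in equation \eqref{F}. The key monotonicity and convexity computations should go through with the substitutions $b\mapsto -c$ adjusted, yielding some absolute constant bound (the precise constant is what ultimately feeds into the numerical threshold).

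Combining a lower bound for $(\text{total Bezout count})-(\text{spurious multiplicities})$, which counts the affine complex solutions to \eqref{N}, with the absolute upper bound on real solutions, and imposing the hypothesis that \emph{all} complex solutions are real, gives an inequality of the form (number of affine solutions) $\leq$ (absolute constant). Since the number of affine solutions grows with $a,c,d$ (roughly like a product of the degrees, minus the controlled spurious contributions), this inequality forces $a,c,d$ to lie below an explicit bound. I expect the arithmetic to be arranged so that this bound comes out as $a,-b,d\leq 20$.

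The main obstacle I anticipate is the homogenization and the multiplicity estimate at the points with $z=0$ and along the coordinate axes. With negative exponents, clearing denominators is asymmetric and may create additional base points or raise the multiplicities of $[1:0:0]$ and $[0:1:0]$ in ways that differ from the $b>0$ computation; getting sharp enough estimates on these (via careful choice of the local parametrization, tracking the gcd-type quantities that appeared as $e$, $g$ in Lemma 2.1) is where the real work lies. Once those multiplicity bounds are sharp, the remaining steps are the same parity bookkeeping and monotonicity arguments as in Section 2, and the final numerical conclusion $a,-b,d\leq 20$ follows by the same comparison of lower and upper bounds.
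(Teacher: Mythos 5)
Your proposal follows essentially the same route as the paper: clear denominators to get $x_2^c(1-x_1)=x_1^a$, $x_1^c(1-x_2)=x_2^d$, account for the spurious solution(s) this creates (the paper computes multiplicity $c^2$ at the origin), homogenize and use Bezout plus local analytic parametrization to lower-bound the number of affine solutions (the paper gets at least $a-1$ besides the origin), and redo the parity/sign-region derivative analysis of Lemma 2.4 to get an absolute real-solution bound (the paper gets $1+6+6+6=19$), whence $a-1\leq 19$ forces $a,-b,d\leq 20$. Your outline matches this structure step for step, including correctly anticipating that the spurious multiplicities at $z=0$ and on the axes are where the real work lies.
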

This proposition will be a consequence of the following two lemmas 3.2 and 3.3, and it immediately implies the full theorem 1.1, again as there are only finitely many remaining cases to check to identify all matrices among these such that all solutions to \eqref{N} are real, and all these cases are already checked in \cite{Zagier}. The proof will be brief since it is similar to the proof of the case $b>0$ above.

Let us write $b=-c$, with $c>0$. Without loss of generality, we assume $a\geq d$. Then $a\geq c+1$. We consider the system of equations
\begin{equation}\label{Z}
\begin{aligned}
x_2^c(1-x_1)=x_1^a\\
x_1^c(1-x_2)=x_2^d
\end{aligned}
\end{equation}
Except for $x_1=x_2=0$, solutions to \eqref{Z} are in one-to-one correspondence with solutions to \eqref{N}. Using the method of local analytic parametrization, one easily sees that the multiplicity of this solution is $c^2$: write $a=h_1a_1$, $c=h_1c_1$, where $h_1$ is the greatest common divisor of $a$ and $c$. Then $x_1=t^{c_1}$, $x_2=\frac{t^{a_1}\theta^k}{(1-t^{c_1})^{\frac{1}{h_1c_1}}}$, where $\theta$ is a primitive $c$th root of unity, and $k$ varies among $1,2,...,h_1$ defines a local analytic parametrization of $x_1^a-x_2^c(1-x_1)$ at $x_1=x_2=0$. Substituting this parametrization into $x_1^c(1-x_2)=x_2^d$ while keeping in mind that $ad>c^2$, one sees easily that for every $k$, the degree of the lowest degree term is $c_1c$. So the multiplicity of this solution is $c^2$. 
\begin{lemma}
The number of solutions to \eqref{Z} other than $x_1=x_2=0$, counting multiplicity, is at least $a-1$.
\end{lemma}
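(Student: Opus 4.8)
The plan is to transplant the Bezout-plus-local-parametrization argument of Section~2 to the system \eqref{Z}. First I would homogenize \eqref{Z} in $\mathbb{CP}^2$ by introducing $z$, writing the first equation in degree $a$ (legitimate since $a\ge c+1>c$) and the second in degree $D:=\max\{c+1,d\}$:
\begin{equation*}
x_2^cz^{a-c}-x_1x_2^cz^{a-c-1}-x_1^a=0,\qquad x_1^cz^{D-c}-x_1^cx_2z^{D-c-1}-x_2^dz^{D-d}=0.
\end{equation*}
By Bezout's theorem the total number of solutions in $\mathbb{CP}^2$, counting multiplicity, is $aD$. The affine solutions with $z\ne0$ other than $x_1=x_2=0$ are exactly the nonzero solutions of \eqref{Z}, and we are told $x_1=x_2=0$ contributes multiplicity $c^2$. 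Hence the quantity to bound below is $aD-c^2-i_\infty$, where $i_\infty$ is the total intersection multiplicity carried by the line $z=0$.

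Next I would locate the points at infinity. Restricting the leading forms to $z=0$, the first equation forces $x_1=0$, so only $[0:1:0]$ survives, except in the boundary case $a=c+1$, where it becomes $x_1(x_1^c+x_2^c)=0$; a direct check shows that the extra roots $x_1^c+x_2^c=0$ never satisfy the second equation unless also $a=d=c+1$, which I would dispose of separately (there the $c$ points $x_1^c+x_2^c=0$ meet transversally and the count is $c+1>a-1$). When $d\ge c+1$ one finds $D=d$ and no common point at infinity, so $i_\infty=0$ and the count is exactly $\Delta=ad-c^2$; since $d\ge c+1$ and $a>c$ this gives $\Delta\ge a+c(a-c)>a-1$. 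This settles everything but the principal case $d\le c$.

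For $d\le c$ we have $D=c+1$ and $[0:1:0]$ is the unique point at infinity, so I would estimate $i_\infty$ by local analytic parametrization exactly as in Lemmas~2.1 and~2.2 (see \cite{Walker}). At the origin in the chart $x_2=1$ the first curve $z^{a-c-1}(z-x_1)=x_1^a$ has $a-c$ branches: one smooth branch tangent to $z=x_1$, and $a-c-1$ branches tangent to the line $z=0$ along which $z\sim x_1^{(a-1)/(a-c-1)}$. Substituting each parametrization into the second equation $x_1^c(z-1)=z^{c+1-d}$ and reading off the lowest-degree term in the branch parameter, the smooth branch contributes $c+1-d$, while the comparison of exponents on the tangent branches is governed by the sign of $c^2-(a-1)(d-1)$. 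Carrying this out gives $i_\infty=c(a-c)$ when $d=1$, $i_\infty\le c(a-c)+1-d$ when $c^2>(a-1)(d-1)$, and $i_\infty=a(c+1-d)$ when $c^2<(a-1)(d-1)$; in each case $a(c+1)-c^2-i_\infty\ge a-1$, and usually $\ge a$.

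The main obstacle, as in Lemma~2.1, is the borderline $c^2=(a-1)(d-1)$: there the two lowest-degree contributions along the tangent branches have the same degree and may cancel, forcing a finer count of exactly how many branches (a $\gcd$-type quantity, playing the role of the $g=\gcd(2b,d)$ correction in Lemma~2.1) produce a higher-order term. Controlling this cancellation is what pins $i_\infty$ down to $c(a-c)+1$ and yields the sharp lower bound $a-1$ rather than $a$; I expect this bookkeeping, together with the few exceptional configurations at infinity noted above, to be the only genuine work.
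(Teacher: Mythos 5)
Your overall strategy is the same as the paper's (homogenize, apply Bezout, subtract the multiplicity $c^2$ at the origin and the multiplicity at infinity computed by local parametrization; you merely parametrize the first curve and substitute into the second, where the paper does the reverse), and your exponent comparisons in the main case $d\le c$ are correct. But your treatment of the case $a=d=c+1$ contains a genuine error. The $c$ points at infinity $[1:w:0]$, $w^c=-1$, do \emph{not} all meet transversally: in the chart $x_1=1$ the two curves are $x_2^c(z-1)=1$ and the smooth graph $z=x_2+x_2^{c+1}$, and restricting the first to the second gives $(x_2^c+1)(x_2^{c+1}-1)$, which has a \emph{double} zero at $x_2=-1$ whenever $c$ is odd. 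So for odd $c$ the total multiplicity at infinity is $c+1$, and the number of nonzero affine solutions is exactly $(c+1)^2-c^2-(c+1)=c=a-1$, not $c+1$. A concrete counterexample to your claim: $a=d=2$, $c=1$ (the matrix $\begin{bmatrix}2&-1\\-1&2\end{bmatrix}$), where $x_2(1-x_1)=x_1^2$, $x_1(1-x_2)=x_2^2$ has the single nonzero solution $(1/2,1/2)$. This configuration is precisely where the lemma's bound $a-1$ is attained; it, and not the borderline $c^2=(a-1)(d-1)$ as you conjecture in your last paragraph, is the reason the lemma says $a-1$ rather than $a$.

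Second, you defer the borderline case $c^2=(a-1)(d-1)$ (equivalently $a+d=\Delta+1$) as expected bookkeeping, but it needs an actual argument, and it turns out to be benign rather than the source of the $-1$. In the paper's notation ($c=h_2c_2$, $c+1-d=h_2d_2$, with $d>1$ so that $d_2<c_2$), the borderline condition reads $ad_2=d_2+c_2(a-c-1)$, hence $ad_2>c_2(a-c)-c_2$; therefore every higher-order term of the two series that might cancel at degree $ad_2$ has degree $ad_2+jc_2>c_2(a-c)$ for $j\ge1$, so the monomial $t^{c_2(a-c)}$ coming from $z^{a-c}$ survives no matter what cancellation occurs. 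This gives valuation at most $c_2(a-c)$ on each branch, hence $i_\infty\le c(a-c)$ and $a(c+1)-c^2-i_\infty\ge a$ in the borderline case --- there is no degradation to $c(a-c)+1$ and no loss of $1$ there. Once you replace your transversality claim by the correct multiplicity count at $[1:-1:0]$ and carry out the borderline case as above, the proof closes.
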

\begin{proof}
Let us denote the quantity in the above lemma by $n$. We introduce a new variable $z$, and consider homogeneous equations corresponding to \eqref{Z}. Since $a\geq c+1$, we need to discuss two cases:\\
Case (i): $d\geq c+1$

In this case, the homogeneous equations are
\begin{equation}
\begin{aligned}\label{W}
x_2^c(z^{a-c}-x_1z^{a-c-1})=x_1^a\\
x_1^c(z^{d-c}-x_2z^{d-c-1})=x_2^d
\end{aligned}
\end{equation}

If $a>c+1$, then $z=0$ implies $x_1=x_2=0$. So we don't have solutions to \eqref{W} with $z=0$. Thus, Bezout's theorem implies that $n\geq ad-c^2=\Delta>a-1$.

If $a=c+1$, then $d=c+1$, \eqref{W} becomes
\begin{align}
x_2^c(z-x_1)=x_1^{c+1}\nonumber\\
x_1^c(z-x_2)=x_2^{c+1}\nonumber
\end{align}
There are $c$ distinct solutions with $z=0$: $[x_1:x_2:z]=[1:w_k:0]$, where $w_k=e^{\frac{\pi i+2k\pi i}{c}}$, $k=0,1,...,c-1$. We calculate the local valuation of the function $x_2^cz-x_2^c-1$ on the curve $z-x_2-x_2^{c+1}=0$. In fact we have $x_2^cz-x_2^c-1=-(x_2+1)(x_2^c+1)$. So $V(x_2^cz-x_2^c-1)$ equals $2$ if $w_k=-1$, and $1$ otherwise. There is at most one $k$ such that $w_k=-1$, so the sum of multiplicities of these solutions is at most $c+1$. Therefore $n\geq (c+1)^2-(c+1)-c^2=c=a-1$.\\
Case (ii): $d\leq c$

In this case, the homogeneous equations are
\begin{equation}
\begin{aligned}\label{c2}
x_2^c(z^{a-c}-x_1z^{a-c-1})=x_1^a\\
x_1^c(z-x_2)=x_2^dz^{c+1-d}
\end{aligned}
\end{equation}
We have only one solution with $z=0$, namely $[x_1:x_2:z]=[0:1:0]$. Let us denote the multiplicity of this solution by $i$.

Again we use the method of local analytic parametrization. Let $x_2=1$, \eqref{c2} are reduced to
\begin{subequations}
\begin{align}
z^{a-c}-x_1z^{a-c-1}-x_1^a=0\label{cc2}\\
x_1^c(z-1)-z^{c+1-d}=0 \label{cc3}
\end{align}
\end{subequations}
Write $c=h_2c_2$, $c+1-d=h_2d_2$, where $h_2$ is the greatest common divisor of $c$ and $c+1-d$. Then $z=t^{c_2}$, $x_1=\frac{t^{d_2}\theta^k}{(t^{c_2}-1)^{\frac{1}{h_2c_2}}}$, where $k$ varies among $1,2,...,h_2$ defines a local analytic parametrization of \eqref{cc3} at $x_1=x_2=0$. Substituting this parametrization into \eqref{cc2}, we get
\begin{equation}
t^{c_2(a-c)}-\frac{t^{d_2+c_2(a-c-1)}\theta^k}{(t^{c_2}-1)^{\frac{1}{h_2c_2}}}-\frac{t^{ad_2}\theta^{ak}}{(t^{c_2}-1)^{\frac{a}{h_2c_2}}}.
\end{equation}
   
If $d>1$, then $c_2(a-c)>d_2+c_2(a-c-1)$, and we have two possibilities:\\
(i) $a+d\neq \Delta+1$, then $d_2+c_2(a-c-1)\neq ad_2$, and therefore $i\leq c+1-d+c(a-c-1)=ac-c^2-d+1<c(a-c)$,\\
(ii) $a+d=\Delta+1$, then $d_2+c_2(a-c-1)=ad_2>c_2(a-c)-c_2$, so the term $t^{c_2(a-c)}$ survives, and $i\leq c(a-c)$.

Either case, we have $n\geq a(c+1)-c^2-c(a-c)=a$.

If $d=1$, the second lowest degree term in $\frac{t^{d_2+c_2(a-c-1)}\theta^k}{(t^{c_2}-1)^{\frac{1}{h_2c_2}}}$ survives, and we have $i\leq a-c+1\leq a$, and $n\geq a(c+1)-c^2-a=c(a-c)\geq a-1$.
\end{proof}
\begin{lemma}
The number of real solutions to \eqref{Z} other than $x_1=x_2=0$, counting multiplicity, is at most $19$.
\end{lemma}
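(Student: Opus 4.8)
The plan is to mirror the proof of Lemma 2.4, with $b$ replaced by $-c$. As there, \eqref{N} has a unique solution in $(0,1)^2$, of multiplicity one, so it suffices to bound the real solutions lying outside $(0,1)^2$. I would first use the two equations of \eqref{Z} to fix the signs of $x_2^c$ and $x_1^c$: reading off the signs of $x_1^a/(1-x_1)$ and of $x_2^d/(1-x_2)$, together with the parities of $a,c,d$, cuts the list of admissible sign regions down to the interior, two regions of the form $\{0<x_1<1,\ x_2<0\}$ and $\{x_1<0,\ 0<x_2<1\}$, two regions with one coordinate exceeding $1$, and the doubly negative region $\{x_1<0,\ x_2<0\}$, each active only for a specific parity pattern of $(a,c,d)$.

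In each admissible region I would eliminate $x_2$ via $x_2=\pm\bigl(x_1^{a}/(1-x_1)\bigr)^{1/c}$ on the real branch dictated by the signs, and substitute into the second equation. Writing $x_1=\mp x$, this reduces counting to the real zeros, with multiplicity, of a function of the form \[ h(x)=\pm\frac{x^{a/c}}{(1+x)^{1/c}}\pm\frac{x^{\Delta/c}}{(1+x)^{d/c}}-1 \] (with $1+x$ replaced by $x-1$ in the regions where a coordinate exceeds $1$). Setting $s=x^{a/c}(1+x)^{-1/c}$, so that $s^{c}(1+x)=x^{a}$, the second term is exactly $s^{d}x^{-c}$, and a direct computation gives $s'=s\,[a+(a-1)x]/(cx(1+x))$; multiplying $h'$ by the invertible factor $cx(1+x)/s$ turns $h'=0$ into an equation of the shape \[ \frac{x^{v}}{(1+x)^{u}}\bigl[(\Delta-d)x+\Delta\bigr]=\pm\bigl[a+(a-1)x\bigr], \] with $u=(d-1)/c$ and $v=(\Delta-a)/c$. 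This is structurally identical to equation \eqref{F} from the $b>0$ case, up to the interchange of the roles of $x$ and $1+x$. The decisive input is again the factorization \[ \Bigl(\frac{x^{v}}{(1+x)^{u}}\Bigr)''=\frac{x^{v-2}}{(1+x)^{u+2}}\times(\text{quadratic in }x), \] so the second derivative of the left-hand side is a fixed power function times a polynomial of degree at most $3$; it has at most $3$ zeros, whence by Rolle $h'$ has at most $5$ and $h$ at most $6$ real zeros counting multiplicity in the relevant region.

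It then remains to sum the per-region bounds over the parity patterns. The essential difference from the $b>0$ case is that the regions with one coordinate in $(0,1)$ no longer reduce to a sum of monotone terms: they produce a difference of two increasing power functions, so they too must be bounded by $6$ rather than by $1$. Running through the eight parity patterns of $(a,c,d)$ (subject to $a\ge d$ and $a\ge c+1$), the worst is $a,c,d$ all even, where precisely three regions are admissible — $\{0<x_1<1,x_2<0\}$, $\{x_1<0,0<x_2<1\}$, and $\{x_1<0,x_2<0\}$ — each contributing at most $6$; adding the unique interior solution gives the claimed bound $1+6+6+6=19$.

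The main obstacle is the bookkeeping that guarantees $19$ is not exceeded: one must check that in every parity pattern the number of simultaneously admissible regions, weighted by their individual root counts, never beats the all-even total, and in particular that the polynomial appearing in the second-derivative factorization really has degree at most $3$ uniformly in $a,c,d$ (so that no region secretly contributes more than $6$). Establishing the exact sign pattern of $h$ in each region, and confirming that the $x>1$ regions are no worse than the negative ones, is the delicate part; once the factorization and the region census are pinned down, the individual analytic estimates are routine.
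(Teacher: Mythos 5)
Your proposal is correct and follows essentially the same route as the paper: the paper's own proof simply invokes the cases (II)/(III) argument from Lemma 2.4 to bound each admissible sign region by $6$, then enumerates the eight parity patterns of $a,c,d$ (its Table 2), with the all-even pattern giving the worst case $1+6+6+6=19$. Your region census and second-derivative factorization match this exactly; in fact your writeup supplies details the paper leaves implicit, such as the explicit elimination, the analogue of equation \eqref{F}, and the check that the regions with a coordinate exceeding $1$ also obey the bound $6$.
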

\begin{proof}
Except for the solution in $(0,1)^2$, again we have four possible cases, identical to the case when $b>0$. In each of these cases, the same argument as in the cases (II) and (III) for $b$ being positive works, which produces an upper bound $6$ for the number of real solutions in each case. Again by enumerating all $8$ cases of $a,c,d$ being even or odd as in table 2, we get an upper bound for the number of solutions counting multiplicity to be $1+6+6+6=19$:
\begin{table}
\caption{upper bound for the number of real solutions $\#$ (for $a,c,d$, 1 denotes odd, and 0 denotes even)}
\begin{center} \begin{tabular}{| c | c | c | c | c | c | c | c | c |}
\hline $c$&1&1&1&1&0&0&0&0\\
\hline $a$&1&0&0&1&1&1&0&0\\
\hline $d$&0&0&1&1&1&0&1&0\\
\hline $\#\leq$& 1+6&1+6+6&1+6&1+6&1&1+6&1+6&1+6+6+6\\
\hline \end{tabular} \end{center}
\end{table}
\end{proof}    
Combining lemmas 3.2 and 3.3, lemma 3.1 is thus proved, and our proof for theorem 1.1 is complete.
\begin{remark}
It looks likely that this method can also be generalized to deal with the case when $A$ is a nonintegral matrix. In this case one has to formulate the problem a bit more carefully and do all the above analysis with more patience. In other words, we expect the above method to provide a complete list of rational matrices $A$ such that all solutions to \eqref{N} are real. Lastly, it is conceivable to speculate that the method may also be generalized to higher rank cases: Again, our goal is to get some inequalities on matrix entries from Bezout's theorem, which hopefully are sharp enough to give an upper bound for matrix entries depending only on the rank, thus giving a finite set of possibilities for the matrix $A$ for each rank, such that all solutions to \eqref{N1} are real. Thanks to the work of A.G. Khovanskii, we already have an almost satisfactory substitute for lemma 2.4 in higher rank cases: corollary 7 on page 80 of the book \cite{Fewnomials}. It provides an upper bound for the number of nondegenerate real solutions to \eqref{N1} depending only on the rank. (Although in rank 2 case, this upper bound is too large compared to ours in lemma 2.4. But the most important thing is the existence of such an upper bound in general.) Consequently, if we can find a good way to estimate the sum of multiplicities of solutions to certain homogeneous equations corresponding to \eqref{N1}, which are not solutions to \eqref{N1}, thus getting a substitute for lemma 2.3, we may achieve our goal. (provided that we also know how to deal with the exceptional cases when some of the real solutions to \eqref{N1} are degenerate) 
\end{remark}      
\section*{Acknowledgments}
We thank David Eisenbud, Richard Borcherds, Christian Zickert, and Junjie Zhou for discussions, and we are grateful to the organizers for the March 2010 AIM workshop on Mock theta functions, where the authors first learned about Nahm's conjecture from Werner Nahm and Don Zagier. We appreciate the referee's sugguestions, which were helpful for improving the presentation of the paper.

\end{document}